\newtheorem{theorem}{Theorem}
\newtheorem{lemma}[theorem]{Lemma}
\newtheorem{proposition}[theorem]{Proposition}
\renewcommand{\next}{\mathit{next}}
\newcommand{\comment}[1]{}
\def\titlerunning{Toward Synthesis of Network Updates}
\title{\titlerunning}
\author{
Andrew Noyes
\institute{Cornell University}
\and
Todd Warszawski 
\institute{Cornell University} 
\and
Pavol {\v C}ern\'y
\institute{University of Colorado Boulder} 
\and
Nate Foster
\institute{Cornell University} 
}
\begin{document}

\maketitle

\begin{abstract}
Updates to network configurations are notoriously difficult to
implement correctly. Even if the old and new configurations are
correct, the update process can introduce transient errors such as
forwarding loops, dropped packets, and access control violations. The
key factor that makes updates difficult to implement is that networks
are distributed systems with hundreds or even thousands of nodes, but
updates must be rolled out one node at a time. In networks today, the
task of determining a correct sequence of updates is usually done
manually---a tedious and error-prone process for network
operators. This paper presents a new tool for synthesizing network
updates automatically. The tool generates efficient updates that are
guaranteed to respect invariants specified by the operator. It works
by navigating through the (restricted) space of possible solutions,
learning from counterexamples to improve scalability and optimize
performance. We have implemented our tool in OCaml, and conducted
experiments showing that it scales to networks with a thousand
switches and tens of switches updating.
\end{abstract}

\section{Introduction}

Most networks are updated frequently, for reasons ranging from taking
devices down for maintenance, to modifying forwarding paths to avoid
congestion, to changing security policies.  Unfortunately,
implementing a network update correctly is an extremely difficult
task---it requires modifying the configurations of hundreds or even
thousands of routers and switches, all while traffic continues to flow
through the network. Implementing updates naively can easily lead to
situations where traffic is processed by switches in different
configurations, leading to problems such as increased congestion,
temporary outages, forwarding loops, black holes, and security
vulnerabilities.

The research community has developed a number of mechanisms for
implementing network updates while preserving important
invariants~\cite{francois:07,francois-bgp,consensus,raza:11,vanbever:11,frenetic-consistent-updates}.
For example, \emph{consensus routing} uses distributed snapshots to
avoid anomalies in routing protocols such as
BGP~\cite{consensus}. Similarly, \emph{consistent updates} uses
versioning to ensure that every packet traversing the network will be
processed with either the old configuration or the new configuration,
but not a mixture of the two~\cite{frenetic-consistent-updates}. But
these mechanisms are either limited to specific protocols and
properties, or are general but expensive to implement, requiring
substantial additional space on switches to represent the forwarding
rules for different configurations.

This paper explores a different idea: rather than attempting to design
a new concrete update mechanism, we use synthesis to generate such
mechanisms automatically. With our system, the network operator
provides the current and target configurations as input, as well as a
collection of invariants that are expected to hold during the
transition. (The current and target configurations should also satisfy
these invariants.) The system either (i) generates a sequence of
modifications to the forwarding rules on individual switches that
transitions the network to the new configuration and preserves the
specified invariants, or (ii) halts with a failure if no such sequence
exists. Overall, our system takes a challenging programming task
usually done by hand today and automates it, using a back-end solver
to perform all tedious and error-prone reasoning involving low-level
network artifacts.

Our system provides network operators with a general and flexible tool
for specifying and implementing network updates efficiently. By
enabling them to specify just the properties that are needed to ensure
correctness, the synthesized updates are able to make use of
mechanisms that would be ruled out in other systems. For example, if
the operator specifies no invariants, then the tool can simply update
the switches in any order, without worrying about possible ill-effects
on in-flight packets. Alternatively, if the operator specifies an
invariant that encodes a firewall, then the network may forward
packets along paths that are different than the ones specified by the
old and new policies, as long as all packets blocked by the firewall
are dropped. This flexibility gives our system substantial latitude in
generating update implementations, and allows it to generate efficient
updates that converge faster, or use fewer forwarding rules, compared
to general techniques such as consistent updates.

Operationally, our system works by checking network properties using a
model checker. We encode the configuration of each switch into the
model, as well as the contents and location of a single in-flight
packet. Using this model, we then pose a sequence of queries to the
model checker, attempting to identify a modification to some switch
configuration that will transition the network to a more updated state
without violating the specified invariants. Determining whether a
configuration violates the invariants is a straightforward LTL model
checking problem. If this step succeeds, then we recurse and continue
the process until we eventually arrive at the new
configuration. Otherwise, we use the counterexample returned by the
model checker to refine our model and repeat the step.

We are able to reduce our synthesis problem to a reachability problem
(as opposed to a game problem) because we assume that the environment
is stable during the time the updates are performed. That is, we
assume that switches do not come up or go down, and that no other
updates are being performed simultaneously. The key challenge in our
setting stems from the fact that although the individual switch
modifications only need to maintain a correct overall configuration,
network configurations are rich structures, so navigating the space of
possible updates effectively is critical. We plan to investigate the
game version of our synthesis problem in future work.

In summary, this paper makes the following contributions
\begin{itemize}
\item We present a novel approach to specifying and implementing
  network updates using synthesis.
\item We develop encodings and algorithms for automatically
  synthesizing network updates using a model checker, and
  optimizations that improve its scalability and performance.
\item We describe a prototype implementation and present the results
  of experiments demonstrating that even our current prototype tool is
  able to scale to networks of realistic size.
\end{itemize} 
The rest of this paper is structured as follows.
Section~\ref{sec:overview} provides an overview to the network update
problem and discusses examples that illustrate the challenges of
synthesizing updates. Section~\ref{sec:model} develops our abstract
network model and defines the update problem
formally. Section~\ref{sec:algorithms} presents algorithms for
synthesizing network updates. Section~\ref{sec:implementation}
describes our implementation. Section~\ref{sec:experiments} presents
the results of our experiments. Section~\ref{sec:related} discusses
related work. We conclude in Section~\ref{sec:conclusion}.

\section{Overview}
\label{sec:overview}

\begin{figure}[t]
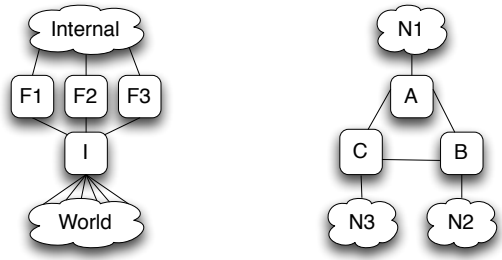

\centerline{\tikz\node[scale=.65]{\pgfimage{examples}};}
\caption{Example network topologies: (a) distributed firewall, (b) cycle.}
\label{fig:example}
\end{figure}

This section provides a basic overview of primitive network update
mechanisms, and presents examples that illustrate the inherent
challenges in implementing network updates.

\paragraph*{Basics.}
Abstractly, a network can be thought of as a graph with switches as
nodes and links as edges. The behavior of each switch is determined by
a set of forwarding rules installed locally. A forwarding rule
consists of a pattern, which describes a set of packets, and a list of
actions, which specify how packets matching the pattern should be
processed. For the purposes of this paper, the precise capabilities of
patterns and actions and the details of how they are represented on
switches will not be important. However, typically patterns support
matching on packet headers and actions support optionally modifying
those headers and forwarding packets out one of its ports.

To process a packet, the network interleaves steps of processing using
the rules installed on switches and steps of processing using the
links themselves. More specifically, given a packet located at a
particular switch, the switch finds a matching rule and applies its
actions to the packet. This moves the packet to an output port on the
switch (or drops it). Assuming there is a link connected to that port,
the network will then transmit the packet to the adjacent switch, and
processing continues.

A network property is a set of paths through the topology. Such
properties can be used to capture basic reachability properties such
as connectivity and loop freedom, as well as more intricate properties
such as access control.

To implement an update to a new configuration, the operator issues
commands that install or uninstall individual forwarding rules on
switches. By carefully constructing sequences of commands, it is
possible to implement an atomic update on a single switch. For
example, the operator can install a set of new rules at a lower
priority than the current rules, and then delete the current rules
using a single uninstall command. But it is not possible to implement
simultaneous coordinated updates to multiple switches, as the network
is a distributed system.

\paragraph*{Distributed Firewall.}
We now present some simple examples of networks and updates that would
be difficult to implement by hand, as motivation for the synthesis
tool described in the following sections. The first example is a
variant of one originally proposed by Reitblatt et
al.~\cite{frenetic-consistent-updates}. The network topology, shown in
Figure~\ref{fig:example}~(a), consists of an ingress switch $I$ and
three filtering switches $F_1$, $F_2$, $F_3$. For simplicity, assume
that traffic flows up from the ``world'' to the ``internal''
network. At all times, the network is required to implement the
following security policy: (i) traffic from authenticated hosts is
allowed, (ii) web traffic from guest hosts is allowed, but (iii)
non-web traffic from guest hosts is blocked.

Initially the network is configured so that the ingress switch $I$
forwards traffic from authenticated hosts to $F_1$ and $F_2$ (which
passes it through), and from guest hosts to $F_3$ (which performs the
required filtering of non-web traffic). However, some time later, the
network operator decides to transition to another configuration where
traffic from authenticated hosts is processed on $F_1$ and traffic
from guest hosts is processed on $F_2$ and $F_3$. Why might they want
to do this? Perhaps there is more traffic from guests hosts than from
authorized hosts, and the operator wishes to allocate more filtering
switches to guest traffic to better handle the load.

Implementing this update correctly turns out to be surprisingly
difficult. If we start by updating switches in an arbitrary order, we
can easily end up in a situation where the security policy is
violated. For example, if we update the ingress switch $I$ to the new
configuration without updating the filtering switches, then traffic
from guest hosts will be forwarded to $F_2$, which will incorrectly
pass it through to the internal network! One possible correct
implementation is to first update $I$ so it forwards traffic from
authenticated hosts to $F_1$, wait until all in-flight packets have
exited the network, update $F_2$ to filter non-web traffic, and
finally update $I$ again so that it forwards guest traffic to $F_2$ or
$F_3$. Finding this sequence is not impossible, but would pose a
significant challenge for the operator, who would have to reason about
all of the intermediate configurations, as well as their effect on
in-flight packets. By contrast, given encodings of the configurations
and the intended security policy, our system generates the correct
update sequence automatically.

\paragraph*{Ring.}
The second example involves the network topology shown in
Figure~\ref{fig:example}~(b). The network forwards packets around the
ring until they reach their destination. For example, if we forward
traffic clockwise around the ring, then a packet going from a host in
$N_1$ to a destination host in $N_3$ might be forwarded from $A$ to
$B$ to $C$. At all times, the network is required to be free of
forwarding loops---that is, no packet should arrive back at the same
port on a switch where it was previously processed. Initially the
network forwards packets around the ring in the clockwise direction,
as just described. Some time later, the network operator decides to
reverse the policy so that traffic goes around the ring in the
opposite direction. Implementing this update without introducing a
forwarding loop is challenging. In fact, if we implement updates at
the granularity of whole switch configurations, it is impossible! No
matter which switch we update first, the adjacent switch will forward
some packets back to it, thereby creating a loop. To implement the
update correctly, we must carefully separate out the traffic going to
each of networks $N_1$, $N_2$, and $N_3$, and transition those traffic
classes to the counter-clockwise configuration one by one. Assuming
the rules have this structure, our system generates the correct update
sequence automatically.

Note that the examples discussed in this section both depend on
updating individual rules on switches (rule granularity). However, the
formal model used in the rest of this paper, only considers updates to
whole switches (switch granularity). This is not a limitation: updates
at rule granularity can be easily reduced to switch granularity by
introducing an additional switch into the model for each rule. Our
tool assumes that this reduction has already been performed.

\section{Network Model}
\label{sec:model}

This section develops a simple abstract model of networks, and defines
the network update synthesis problem formally. Our model is based on
one proposed in previous work by Reitblatt et al.~\cite{frenetic-consistent-updates}. 


\paragraph*{Topologies and packets.}
A {\em network topology} is a tuple
$(\Switches,\Ports,\inport,\outport,\ingress)$, where $\Switches$ is a
finite set of switches; $\Ports$ is a finite set of ports with
distinguished ports $\DROP$ and $\WORLD$; $\ingress \in 2^\Ports$ is a
set of ingress ports; $\inport \in \Ports \times \Switches$ is a
relation such that for every port $\Port \in \Ports \setminus
\{\WORLD,\DROP\}$, there exists a unique switch $\Switch \in
\Switches$ with $\inport(\Port,\Switch)$; and $\outport \in \Switches
\times \Ports$ is a relation such that for every port $\Port \in
\Ports \setminus (\ingress \cup \{\WORLD,\DROP\})$, there exists a
unique switch $\Switch \in \Switches$ with
$\outport(\Switch,\Port)$. A \emph{packet} $\pt$ is a finite sequence
of bits. We assume that we can ``read off'' the values of standard
header fields such as Ethernet and IP addresses and TCP ports, and we
write $\PacketTypes$ for the set of all packets. A \emph{located
  packet} is a pair $(\Port,\pt)$, where \Port\ is a port and \pt\ is
a packet.

\paragraph*{Policies and updates.}
The switches in the network make decisions about how to forward
packets by examining their headers and the ingress ports on which they
arrive. We model this behavior using switch policies: a \emph{switch
  policy} \SwitchPolicy\ is a partial function \Partialt{\Ports \times
  \PacketTypes}{\Ports \times \PacketTypes}. A switch policy
\SwitchPolicy\ is {\em compatible} with a switch \Switch\ if whenever
\SwitchPolicy\ is defined on $(\Port,\pt)$ and returns $(\Port',\pt')$
then $\inport(\Port,\Switch)$ and $\outport(\Switch,\Port')$. Note
that real switches can forward packets out multiple ports. For
simplicity, in this paper, we restrict our attention to linear traces
of packets and only consider switch policies that generate at most one
packet.

A {\em network policy} \NetPolicy\ is a function
$\Totalt{\Switch}{\SwitchPolicy}$ where for all switches $\Switch \in
\Switches$, the switch policy \NetPolicy(\Switch) is compatible with
\Switch. The \emph{path} of a packet through the network is determined
by the topology and network policy. 

An {\em update} is a pair $(\Switch,\SwitchPolicy)$ consisting of a
switch \Switch\ and a switch policy \SwitchPolicy, such that
\SwitchPolicy\ is compatible with \Switch. Given a network policy
$\NetPolicy$ and an update $(\Switch,\SwitchPolicy)$, the expression
$\NetPolicy[\Switch \gets \SwitchPolicy]$ denotes a network policy
$\NetPolicy'$, where $\NetPolicy'(\Switch)=\SwitchPolicy$ and
$\NetPolicy'(\Switch')=\NetPolicy(\Switch')$ if $\Switch' \neq
\Switch$. Note that an update only modifies the policy for a single
switch.

\paragraph*{Commands and states.}
A {\em command} \command\ is either an update or the special command
\wait. A \wait{} command models the pause between updates needed to
ensure that packets that entered the network before the previous
command will leave the network before the next command. Intuitively,
waiting ``long enough'' makes sense only for network policies which
force every packet to leave in a bounded number of steps. This is
formalized below as the notion of \emph{wait correctness}.

A {\em network state} \NetState\ is a tuple
$(\LocPkt,\NetPolicy,\waitBool,\comSeq)$, where \LocPkt\ is a located
packet, \NetPolicy\ is a network policy, \waitBool\ is a Boolean
modeling whether updates are enabled, and \comSeq\ is a command
sequence. Note that our model only includes a single packet in the
network at any given time. As we are only interested in properties
involving paths of individual packets through the network,
intuitively, this is sufficient. However, we also need to be able to
generate new packets at ingress ports at any given time during the
execution of the network. One can prove (in a straightforward way)
that this model is equivalent to a full
model~\cite{frenetic-consistent-updates} with respect to LTL
properties of paths of individual packets.

\paragraph*{Traces.}
A {\em network transition} is a relation
$\Step{\NetState}{}{\NetState'}$ on states. There are four types of
transitions:
\begin{compactitem}
\item A {\em packet move} if 
\[
\begin{array}{rcl}
\NetState & = & ((\Port,\pt),\NetPolicy,\waitBool,\comSeq)\\
\NetState' &= & ((\Port',\pt'),\NetPolicy,\waitBool,\comSeq)
\end{array}
\]
and $\Port \not\in \{\WORLD,\DROP\}$, where there exists a switch
$\Switch$ such that $\inport(\Port,\Switch)$ and
\[
\NetPolicy(\Switch)(\Port,\pt)=(\Port',\pt')
\]
or $\pt=\pt'$ and $\Port=\Port'=\WORLD$ or $\pt=\pt'$ and
$\Port=\Port'=\DROP$.
\item An {\em update transition} if 
\[
\begin{array}{rcl}
\NetState &=& (\LocPkt,\NetPolicy,\false,(\Switch,\SwitchPolicy).\comSeq)\\
\NetState' &=& (\LocPkt,\NetPolicy[\Switch \gets \SwitchPolicy],\false,\comSeq)
\end{array}
\]
\item A {\em wait transition} if 
\[
\begin{array}{rcl}
\NetState & = & (\LocPkt,\NetPolicy,\waitBool,\wait.\comSeq)\\
\NetState' & = & (\LocPkt,\NetPolicy,\true,\comSeq) 
\end{array}
\]
The wait transition disables update transitions (by setting
$\waitBool$ to true), thus modeling the semantics of wait commands as
explained above.
\item A {\em new packet transition} if 
\[
\begin{array}{rcl}
\NetState &=& ((\Port,\pt),\NetPolicy,\waitBool,\comSeq)\\
\NetState' &=& ((\Port',\pt'),\NetPolicy,\false,\comSeq)
\end{array}
\]
where $\Port' \in \ingress$. (Note that there is no condition on the
new packet.)  This transition models that, non-deterministically, we
can decide to track a new packet. 
\end{compactitem}

A {\em network trace} $\NetTrace$ is an infinite sequence of states
$\NetState_0\NetState_1 \ldots$ such that for all $i \geq 0$ we have
that $\Step{\NetState_i}{}{\NetState_{i+1}}$.  A network trace {\em
  initialized with} a policy $\NetPolicy$ and a command sequence
$\comSeq$ is a network trace such that $\NetState_0 =
(\LocPkt,\NetPolicy,b_w,\comSeq)$, for some located packet $\LocPkt$
and Boolean $b_w$.


A {\em one-packet trace} $t = \LocPkt_0 \LocPkt_1 \ldots $ is a
sequence of located packets that conforms to the network
topology. That is, for all $i < |t|$, we have that if $\LocPkt_i =
(\Port,\pt)$ and $\LocPkt_{i+1} = (\Port',\pt')$, then there exists a
switch $\Switch \in Switches$, such that $\inport(\Port,\Switch)$ and
$\outport(\Switch,\Port')$.  A {\em complete one-packet trace} is a
finite trace such that $\LocPkt_0 \LocPkt_1 \ldots \LocPkt_n$ such that
$\LocPkt_0 = (\Port,\pt)$ where $\Port$ is in $\ingress$ and
$\LocPkt_n = (\WORLD,\pt')$ or $\LocPkt_n = (\DROP,\pt')$.

A one-packet trace $t = \LocPkt_0 \LocPkt_1 \ldots
\LocPkt_n$ is {\em contained} in a network trace $\NetTrace =
\NetState_0 \NetState_1 \ldots$ if there is a function $f$ (witnessing
the containment) from $[0,n]$ to $\mathbb{N}$ with the following
properties:
\begin{compactitem}
\item for all $i \in [0,n-1]$, $f(i) < f(i+1)$;
\item for all $i \in [0,n]$, we have that if $\NetState_{f(i)} =
  (\LocPkt,\NetPolicy,\waitBool,\comSeq)$, then $\LocPkt = \LocPkt_i$;
\item for all $i \in [0,n-1]$, the transitions occurring between
  $f(i)$ and $f(i+1)-1$ in $\NetTrace$ are only update and wait
  transitions, and the transition between $f(i+1)-1$ and $f(i+1)$ is
  a packet move transition.
\end{compactitem}
A given network trace may contain traces of many packets generated by
new packet transitions.

\paragraph*{Wait and command correctness.}
A network state $\NetState$ is {\em wait-correct} if, intuitively, the
packet cannot stay in the network for an unbounded amount of time.
Formally, $\NetState$ is wait-correct if for all infinite network
traces $\NetTrace = \NetState_0\NetState_1 \ldots$ such that
$\NetState_0 = \NetState$, and the transition from $\NetState_0$ to
$\NetState_1$ is a wait transition, either there exists $i \in
\mathbb{N}$ such that for all $j \in \mathbb{N}$ with $j > i$ the
packet at $\NetState_j$ is located at $\DROP$ or at $\WORLD$, or there
exists $i \in \mathbb{N}$ such that the transition from $\NetState_i$
to $\NetState_{i+1}$ is a new packet transition.


The function $\infin(\cpt)$ appends an infinite suffix of the form
$\LocPkt_n^\omega$ to the complete one-packet trace $\cpt =
\LocPkt_0\LocPkt_1 \ldots \LocPkt_n$. Recall that a complete
one-packet trace ends with the packet located at $\DROP$ or $\WORLD$,
so $\infin(\cpt)$ models a packet staying outside of the network.

\paragraph*{LTL.}
We now define LTL formulas and their semantics over infinite
one-packet traces. Atomic formulas are of the form $\mathit{packet =
  \pt}$ or $\mathit{port = \Port}$. A formula $\varphi$ is an LTL
formula, if it is an atomic formula, or is of the form $\neg
\varphi_1$, $\varphi_1 \vee \varphi_2$, $X \varphi$, $\varphi_1 U
\varphi_2$, where $\varphi_1$ and $\varphi_2$ are LTL formulas.  As is
standard, we will also use connectives $F$ and $G$ that can be defined
in terms of the other connectives.  Let $t$ be an infinite one-packet
trace $\LocPkt_0 \LocPkt_1 \ldots$.  We have that $t \models
\mathit{packet} = \pt$ if there exists a port $\Port$ such that
$\LocPkt_0 = (\pt,\Port)$. Similarly, we have that $t \models
\mathit{port} = \Port$ if there exists a packet $\pt$ such that
$\LocPkt_0 = (\pt,\Port)$. The semantics of Boolean and temporal
connectives is standard. An example of an LTL specification for the
distributed firewall example is given in Figure~\ref{fig:nusmv} in
Section~\ref{sec:implementation}.

Let $\varphi$ be an LTL formula. A network trace $\NetTrace$ satisfies
an LTL formula $\varphi$ (written $\NetTrace \models \varphi$) if
for all complete one-packet traces $\cpt$ contained in $\NetTrace$, we
have that $\infin(\cpt) \models \varphi$. Let $\comSeq =
\command_0\command_1 \ldots \command_{k-1}$ be a sequence of commands,
and let $\NetPolicy$ be a network policy. A sequence of network
policies $\NetPolicy_0 \NetPolicy_1 \ldots \NetPolicy_n$ is induced by
$\comSeq$ and $\NetPolicy$, if
\begin{compactitem}
\item $\NetPolicy_0 =\NetPolicy$
\item for all $i$ in $[0,k-1]$, if $\command_i = \wait$ then
  $\NetPolicy_{i+1} =\NetPolicy_i$ 
\item for all $i$ in $[0,k-1]$, if $\command_i = (\Switch,\SwitchPolicy)$ then
  $\NetPolicy_{i+1} = \NetPolicy_i[\Switch \gets \SwitchPolicy]$ 
\end{compactitem}
We write $\Step{\NetPolicy}{\comSeq}{\NetPolicy'}$ if the last element
of the sequence induced by $\comSeq$ is $\NetPolicy'$. A command
sequence $\comSeq$ is {\em correct} with respect to a formula
$\varphi$ and policy $\NetPolicy_i$ if for all network traces
$\NetTrace$ initialized with $\NetPolicy_i$ and $\comSeq$, we have
that $\NetTrace$ is wait-correct and $\NetTrace \models \varphi$.

\paragraph*{Update synthesis problem.} 
With this notation in hand, we are now ready to formally state the
\emph{network update synthesis problem}. Given an initial network
policy $\NetPolicy_i$, a final network policy $\NetPolicy_f$, and a
specification $\varphi$, construct a sequence of commands $\comSeq$
such that:
\begin{compactitem}
\item $\Step{\NetPolicy_i}{\comSeq}{\NetPolicy_f}$, and 
\item $\comSeq$ is correct with respect to $\varphi$ and $\NetPolicy_i$.
\end{compactitem}
The next section develops an algorithm that solves this problem.

\renewcommand{\thealgorithm}{}
\begin{figure}[t!]
\begin{algorithmic}[1]

\Statex{\hspace*{-6.5mm}\textbf{Procedure}~\textsc{OrderUpdate}($\NetPolicy_i,\NetPolicy_f,\varphi$)}
\Require~Initial network policy $\NetPolicy_i$, final network policy $\NetPolicy_f$, and LTL specification $\varphi$.  
\Ensure~Simple and careful sequence of switch updates $L$, if it exists
\If {hasLoops($\NetPolicy_i$) $\vee$ hasLoops($\NetPolicy_f$)}
\State \Return ``Loops in initial or final configuration.''
\Else
  \State $W \gets$ false\Comment{Wrong configurations.}   
  \State $V \gets$ false\Comment{Visited configurations.} 
  \State (ok,~$L$) $\gets$ DFSforOrder($\NetPolicy_i$,~$\bot$)
  \If {ok} 
     \State \Return $L$
  \Else 
     \State \Return ``No simple and careful update sequence exists.''
  \EndIf
\EndIf
\Statex
\Statex{\hspace*{-6.5mm}\textbf{Procedure}~\textsc{DFSforOrder}($\NetPolicy,cs$)}
\Require~Current network policy $\NetPolicy$, most recently updated switch $cs$.
\Ensure~Boolean ok if a correct update sequence exists; $L$ correct sequence 
of switch updates
   \If {$\NetPolicy = \NetPolicy_f$} 
      \State \Return (true,$[\NetPolicy]$) \Comment{Reached final configuration.}   
   \EndIf
   \If {$\NetPolicy \models V$} 
      \State \Return (false,$[]$)\Comment{Already visited $\NetPolicy$.}      
   \EndIf

   \State $V \gets V \vee \NetPolicy$\Comment{Add to visited configurations.} 
   
   \If {$\NetPolicy \models W$} 
     \State \Return (false,$[]$)\Comment{Previous counterexample applies.}  
   \EndIf

   \If {$cs \neq \bot$}\Comment{If there was a previous update,}
     \State\label{line:checkNewLoops} (ok,cex) $\gets$
     hasNewLoops($\NetPolicy$,$cs$)\Comment{Check for forwarding loops.}
     \If {($\neg$ ok)}
       \State $W \gets W \vee$ analyzeCex(cex)\Comment{Learn from loop counterexample.} 
       \State \Return (false,$[]$)
     \EndIf
   \EndIf

   \State\label{line:modelcheck} (ok,cex) $\gets$ ModelCheck($\NetPolicy$,$\varphi$)
   \If {($\neg$ ok)}
     \State $W \gets W \vee$ analyzeCex(cex)\Comment{Learn from property counterexample.} 
     \State \Return (false,$[]$)
   \EndIf

   \ForAll {$(\NetPolicy_\next,cs) \in$ NextPolicies$(\NetPolicy)$ }\Comment{Try to update one more switch.} 
     \State (ok,$L$) $\gets$ DFSforOrder($\NetPolicy_\next,cs$)\Comment{Recursive call.}
     \If {ok}
       \State\label{line:addcommand} \Return (true,$\NetPolicy::\wait::L$)
     \EndIf
  \EndFor 
  \State \Return (false,$[]$)
\Statex
\Statex
\end{algorithmic}
\caption{\textsc{OrderUpdate} Algorithm.}
\label{algo:order}
\end{figure}

\section{Update Synthesis Algorithm}
\label{sec:algorithms}

This section presents an algorithm that synthesizes correct network
updates automatically. The algorithm attempts to find a sequence of
individual switch updates that transition the network from the initial
configuration to the final configuration, while ensuring that the path
of every packet traversing the satisfies the invariants specified by
the operator.\footnote{We assume that the topology is fixed, so that a
  network configuration is just a network policy.}  It works by
searching through the space of possible update sequences, but
incorporates three important optimizations aimed at making synthesis
more efficient.

\paragraph*{Optimizations.}
The first optimization restricts the search space to solutions that
update every switch in the network at most once. We call solutions
with this property \emph{simple}. Because the space of simple 
solutions is much smaller than the full space of solutions, this leads
to a much more efficient synthesis procedure in practice.

The second optimization restricts the search space to solutions for
which the synthesis procedure can efficiently check correctness.
Because the network continues to process packets even as it is being
updated, in general a packet may traverse the network during multiple
updates. Hence, to ensure the correctness of the path of such a
packet, it is necessary to check properties of sequences of network
configurations, which can lead to an explosion of model checking
tasks. We therefore introduce the notion of {\em careful}
updates---update sequences where the system pauses between each step
to ensure that all packets that were in flight before the step will
have exited the network. There is one caveat worth noting: waiting
only makes sense only for configurations for which every packet leaves
a network after a bounded number of steps. To ensure this is possible,
we require configurations to be loop-free in the sense that the policy
has the property that every packet is processed by a given switch at
most once. We thus have that every packet is in the network during at
most one update. For such loop-free configurations, every packet
either has a path using the configuration before a given step was
applied, or the configuration after the step was applied. This enables
us to check correctness of configurations separately. We do not need
to check all possible configurations, we only need to check those
encountered during the search.

The third optimization uses counterexamples to reduce the number of
calls to the model checking procedure. The purpose of a call to the
model checker is to check that all possible packet paths satisfy the
specified invariants. However, if the model checker identifies a path
that does not satisfy an invariant, the path is returned as a
counterexample and can be used to eliminate future configurations
quickly. In particular, any intermediate configurations in which the
switches are configured in the same way as in the counterexample can
be eliminated without having to consult the model checker.

\paragraph*{Algorithm.}
Figure~\ref{algo:order} presents pseudocode for the
\textsc{OrderUpdate} algorithm. It returns a sequence of careful and
simple commands that implement the update correctly, or fails if no
such sequence exists. The notions of simple and careful command
sequences are defined formally below. We speak of sequences of
commands, rather than sequences of updates, because we also include
wait-commands for the reasons described above. The rest of this
section describes the algorithm in detail and proves that it is sound
and complete with respect to simple and careful command
sequences. That is, if a simple and careful command sequence exists,
then the algorithm will find it.

As we are interested only in simple command sequences, the main task
is to find an order of switch updates. To do this, it uses a
depth-first search, where at each recursive call, we update one
switch. We consider only switches whose policy is different in the
initial and final configurations. We opted for depth-first search as
we expect that, in common cases, many update sequences will lead to a
solution.

Before starting the search, we check that the initial and final
configurations have no loops---otherwise a simple and careful sequence
of commands does not exist. This is done by two calls to a function
$\mathrm{hasLoops}$. During the search, we check that
each update that we encounter has not introduced new loops into the
configuration.  This is done in the auxiliary function
$\mathrm{hasNewLoops}(\NetPolicy_\next,cs)$, which takes as parameters
the updated network policy and the switch that was updated. This check
can be easily implemented using an LTL formula, as any new loop must
pass through the updated switch.


The search maintains a formula $V$ that encodes visited
configurations, and a formula $W$ that encodes the set of
configurations excluded by counterexamples so far.  The auxiliary
function $\mathrm{analyzeCex}$ analyzes a counterexample, and outputs
a formula representing the set of switches that occurred in the
counterexample, and whether these switches were already updated.

If the current configuration was not visited before, and is not
eliminated by previous counterexamples, we check whether all packet
traces traversing this configuration satisfy the LTL specification
$\varphi$. This is the purpose of the call to
$\mathrm{ModelCheck}(\NetPolicy,\varphi)$. In our implementation, we
use NuSMV~\cite{CCGGPRST02} as a back-end model checker.

If the current configuration passes all these tests, we continue the
depth-first search, with next configurations being those where one
more switch is updated. If we reach the final configuration, we pop
out of the recursive calls, and prepend the corresponding updates
(separated by wait commands) to the command sequence returned.

\paragraph*{Soundness.}
Now we prove that \textsc{OrderUpdate} is sound. For the remainder of this
section, let us fix a specific network topology
$(\Switches,\Ports,\inport,\outport,\ingress)$.


A network policy $\NetPolicy$ satisfies an LTL formula $\varphi$
(denoted by $\NetPolicy \models \varphi$), if for all network traces
$\NetTrace$ initialized with $\NetPolicy$ and the empty command
sequence, we have that $\NetTrace \models \varphi$.  A network policy
$\NetPolicy$ {\em induces} a one-packet trace $t$, if there exists a
network trace $\NetTrace$ initialized by $\NetPolicy$ and the empty
sequence of commands, such that $\NetTrace$ contains $t$.

A policy $\NetPolicy$ is {\em loop-free} if, intuitively, there is no
loop in the graph given by the network topology and $\NetPolicy$.
More formally, for all sequences $w = \Port_0\Switch_0\Port_1\Switch_1
\ldots \Port_k\Switch_k$ that conform to the network topology and to
$\NetPolicy$, we have that no port (and no switch) occurs more than
once in $w$.  A sequence $\Port_0\Switch_0\Port_1\Switch_1 \ldots
\Port_k\Switch_k$ conforms to the network topology and to
$\NetPolicy$, if for all $i\in [0,k-1]$, we have that
$\inport(\Port_i,\Switch_i)$, and there exist packets $\pt$ and $\pt'$
such that $\NetPolicy(\Switch_i)(\Port_i,\pt)= (\Port_{i+1},\pt')$.

Let $\comSeq = \command_0\command_1 \ldots \command_{n-1}$ be a
command sequence, and let $\NetPolicy$ be a network policy. Let
$\NetPolicy_0 \NetPolicy_1 \ldots \NetPolicy_n$ be the sequence of
network policies induced by $\comSeq$ and $\NetPolicy$.  The command
sequence $\comSeq$ is {\em careful} with respect to an LTL formula
$\varphi$ and a network policy $\NetPolicy$ if
\begin{compactitem}
\item for all $i \in [0,n-1]$, if $i$ is odd, then $\command_i = \wait$, 
\item for all $i \in [0,n]$, $\NetPolicy_i$ is loop-free, and
\item for all $i \in [0,n]$, $\NetPolicy_i \models \varphi$. 
\end{compactitem}  

Let $\comSeq$ be a careful sequence of commands. Let $\NetPolicy_i$ be
a network policy.  Let $\NetTrace = \NetState_0 \NetState_1 \ldots$ be
a network trace initialized with $\NetPolicy_i$ and $\comSeq$. Let $t
= \LocPkt_0\LocPkt_1 \ldots \LocPkt_n$ be a one-packet trace contained
in $\NetTrace$.  Let $\sigma = \Switch_0\Switch_1 \ldots
\Switch_{n-1}$ be a sequence of switches such that $i < n$, we have
that if $\LocPkt_i = (\Port,\pt)$ and $\LocPkt_{i+1} = (\Port',\pt')$,
then $\inport(\Port,\Switch_i)$ and $\outport(\Switch_i,\Port')$. Our
first lemma states that the path of every packet is affected by at
most one update.


\begin{lemma}\label{lemma:eachpacketoneupdate}
Let $f$ be the function witnessing the containment of $t$ in
$\NetTrace$. There is at most one update transition in $\NetTrace$
between $f(0)$ and $f(n)$.
\end{lemma}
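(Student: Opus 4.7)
The plan is to argue by contradiction, exploiting the interplay between the careful structure of $\comSeq$ and the flag $\waitBool$ that gates update transitions. Suppose two or more update transitions occurred in $\NetTrace$ between indices $f(0)$ and $f(n)$; I will derive a contradiction with the semantics of wait.

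First I would observe that, because $\comSeq$ is careful, its commands alternate update, \wait, update, \wait, $\ldots$ (with wait commands at every odd position). Since commands are consumed in order by update and wait transitions, any two successive update transitions in $\NetTrace$ must be separated by a wait transition. In particular, if two update transitions happen within the window $[f(0),f(n)]$, then a wait transition occurs strictly between them, also inside the window.

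Next I would use the semantics: a wait transition sets $\waitBool$ to $\true$, and an update transition is only enabled when $\waitBool = \false$. The only rule that can reset $\waitBool$ to $\false$ is the new packet transition. However, the containment clause for $t$ in $\NetTrace$ explicitly stipulates that the only transitions occurring between $f(i)$ and $f(i+1)$ are update, wait, and (at $f(i+1)$) packet move transitions. Hence no new packet transition occurs in the window $[f(0),f(n)]$, so once $\waitBool$ becomes $\true$ in this window it stays $\true$.

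Combining these two points yields the contradiction: the assumed wait transition between the two update transitions flips $\waitBool$ to $\true$, and without any intervening new packet transition it cannot be reset, so the second update transition is disabled. Therefore at most one update transition occurs between $f(0)$ and $f(n)$. The only subtle point to spell out carefully is the appeal to the containment definition to rule out new packet transitions inside the window; the rest is essentially bookkeeping of how careful sequences interact with the $\waitBool$ flag.
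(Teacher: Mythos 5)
Your proof is correct and follows essentially the same argument as the paper's: assume two update transitions in the window, use carefulness to place a wait transition between them, observe that only a new packet transition can reset $\waitBool$ to $\false$, and derive a contradiction with the containment condition, which forbids new packet transitions between $f(0)$ and $f(n)$. Your version is, if anything, slightly more explicit than the paper's about why $\waitBool$ stays $\true$ once set.
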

\begin{proof}
We use the fact that $\comSeq$ is careful, specifically that every
command at an odd position in the sequence of commands is $\wait$.
Let us assume that there are two update transitions between $f(0)$ and
$f(n)$ in $\NetTrace$. Let these two update transitions occur at
network states $\NetState_i$ and $\NetState_j$ such that $f(0) \leq i
< j < f(n)$. As $\comSeq$ is careful, we have that there is a
wait-transition that occurs at $\NetState_w$, where $i < w < j$.  As
wait-transitions disable updates (by setting $\waitBool$ to true; this
is because the wait command models waiting long enough so that packets
that entered the network before the previous update will leave the
network before the next update), there has to be a new packet
transition $\NetState_p$, where $w < p < j$. This contradicts the fact
that $t$ is contained in $\NetTrace$, which concludes the proof.
\end{proof}

The second lemma states that a path of every packet in the network
could have occurred in of the intermediate configurations. That is, no
packet takes a path non-existent in any of the configurations, even
though the packet might be in-flight during the updates.

\begin{lemma}\label{lemma:eachpacketseesoneconfig}
There exists $i \in \mathbb{N}$ such that $\NetState_i$ induces $t$.
\end{lemma}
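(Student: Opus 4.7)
The plan is to use Lemma~\ref{lemma:eachpacketoneupdate} to reduce to the case of at most one update, and then to argue that even when one update occurs, a single policy (either the one before or the one after the update) suffices to reproduce the entire packet trace $t$.

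First I would apply Lemma~\ref{lemma:eachpacketoneupdate} to obtain at most one update transition in $\NetTrace$ between $f(0)$ and $f(n)$. If there is no such update transition, the network policy component is constant on all states $\NetState_{f(0)}, \ldots, \NetState_{f(n)}$, and I can take $i = f(0)$: the sequence of packet-move transitions witnessing $t$ in $\NetTrace$ can be replayed in a trace initialized with $\NetPolicy_{f(0)}$ and the empty command sequence (padding with new-packet transitions at the end if needed to obtain an infinite trace), so $\NetState_{f(0)}$ induces $t$.

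For the remaining case, suppose the unique update transition occurs between $\NetState_k$ and $\NetState_{k+1}$, changing the switch policy of some switch $\Switch_u$ from $\alpha$ (in $\NetPolicy_k$) to $\beta$ (in $\NetPolicy_{k+1}$); the two policies agree on every switch other than $\Switch_u$. The key step is to show that the packet visits $\Switch_u$ at most once along $t$. I would prove this by contradiction: assume $\Switch_u$ is visited at two distinct steps $m_1 < m_2$, with corresponding packet-move positions $f(m_1+1)-1$ and $f(m_2+1)-1$. Since the first visit must occur before the update (or the second visit after it, otherwise both would use the same policy and the loop-freeness of that single policy would already give a contradiction), one visit is under $\alpha$ and the other under $\beta$. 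The intermediate packet moves from $\LocPkt_{m_1+1}$ to $\LocPkt_{m_2}$ only involve switches different from $\Switch_u$, so they are determined identically by $\NetPolicy_k$ and $\NetPolicy_{k+1}$. This means the sequence of switches traversed between the two visits would also appear as a valid path under whichever of $\NetPolicy_k$ or $\NetPolicy_{k+1}$ contains the earlier visit of $\Switch_u$, producing a path through $\Switch_u$ that returns to $\Switch_u$ and contradicting loop-freeness of that policy (which is guaranteed because $\comSeq$ is careful).

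Given that $\Switch_u$ is visited at most once along $t$, I split into subcases. If $\Switch_u$ is never visited, then every packet move in $t$ is determined identically by $\NetPolicy_k$ and $\NetPolicy_{k+1}$, and either $\NetState_k$ or $\NetState_{k+1}$ will do. If $\Switch_u$ is visited exactly once, at some step $m$, then the packet move $\LocPkt_m \to \LocPkt_{m+1}$ uses the $\Switch_u$-policy of $\NetState_{f(m+1)-1}$, which is $\alpha$ if $f(m+1) \leq k+1$ and $\beta$ otherwise; I choose $i = k$ in the former case and $i = k+1$ in the latter. In either case I verify that $\NetState_i$ induces $t$ by building a network trace initialized with $\NetPolicy_i$ and the empty command sequence, starting from the located packet $\LocPkt_0$ and performing the packet-move transitions that replay $\LocPkt_1, \LocPkt_2, \ldots, \LocPkt_n$ in order (each such move is enabled because the relevant switch policy in $\NetPolicy_i$ matches the one that produced the corresponding step in $\NetTrace$), then extending to an infinite trace by new-packet transitions. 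The main obstacle is the non-revisit argument in the previous paragraph; everything else is routine once that is in place.
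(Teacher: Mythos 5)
Your proposal is correct and takes essentially the same route as the paper's proof: reduce to at most one update transition via Lemma~\ref{lemma:eachpacketoneupdate}, then case on whether the updated switch occurs zero times, once, or more than once on the packet's path, ruling out the last case via loop-freeness of the careful sequence and choosing between the pre- and post-update policy in the middle case. The only small imprecision is that your non-revisit argument should be run on two \emph{consecutive} occurrences of the updated switch (as the paper does, taking the two closest occurrences) so that the intermediate switches are genuinely distinct from it; with that adjustment the argument goes through as you describe.
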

\begin{proof}
We use the fact that $\comSeq$ is careful, specifically that $\comSeq$
is such that each $\NetPolicy_i$ (for $0 \leq i \leq n)$ is loop-free.

Let $f$ be the function witnessing the containment of $t$ in
$\NetTrace$. By Lemma~\ref{lemma:eachpacketoneupdate}, we have that
there is at most one update transition in $\NetTrace$ between $f(0)$
and $f(n)$. Let the update transition be given by the update
$(\Switch,\SwitchPolicy)$. Let $\NetState_j$ ($f(0) \leq j < f(n)$) be
the network state in which the update transition occurs.  We show that
either $\NetState_j$ or $\NetState_{j+1}$ induces $t$.

Now let us consider $\sigma$ (defined above), which intuitively is the
sequence of switches that a packet sees as it traverses the
network. We analyze the following cases:
\begin{compactitem}
\item $\Switch$ does not occur in $\sigma$. Then $t$ was not influenced
  by the update, and is thus induced by $\NetState_j$.
\item $\Switch$ occurs in $\sigma$, but only once. Let $l$ be the
  smallest position in $t$ such that there exist a port $\Port$ and a
  packet $\pt$ such that $\LocPkt_l = (\Port,\pt)$ and
  $\outport(\Switch,\Port)$.  If $f(l)$ is less than $j$ (i.e. the
  packet was at $\Switch$ before the update happened) then $t$ is
  induced by $\NetPolicy_j$. If $f(l)$ is greater than $j$, then $t$
  is induced by $\NetPolicy_{j+1}$.
\item $\Switch$ occurs more than once in $\sigma$. Let $\Switch_k$
  $\Switch_{k+1}$ $\Switch_l$ be the subsequence of switches between
  two closest occurrences of $\Switch$ in $\Sigma$.  As none of the
  switches in the subsequence were updated, we have that
  $\NetPolicy_j$ or $\NetPolicy_{j+1}$ is not loop-free, which
  contradicts the assumption that $\comSeq$ is careful.
\end{compactitem}
This completes the proof.
\end{proof}

\noindent The third lemma states that carefulness (which is easily
checkable) implies correctness.

\begin{lemma}\label{lemma:careful}
If a command sequence $\comSeq$ is careful with respect to an LTL
formula $\varphi$ and a network policy $\NetPolicy$, then $\comSeq$ is
correct with respect to $\varphi$ and $\NetPolicy$.
\end{lemma}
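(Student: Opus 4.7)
The plan is to unpack both halves of the definition of correctness (wait-correctness and $\NetTrace \models \varphi$) and reduce each of them to one of the three clauses in the definition of carefulness, leaning heavily on Lemma~\ref{lemma:eachpacketseesoneconfig} for the LTL part.

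First I would fix an arbitrary network trace $\NetTrace = \NetState_0 \NetState_1 \ldots$ initialized with $\NetPolicy$ and $\comSeq$, and let $\NetPolicy_0, \ldots, \NetPolicy_n$ be the sequence of policies induced by $\comSeq$ and $\NetPolicy$. To establish $\NetTrace \models \varphi$, I would pick an arbitrary complete one-packet trace $\cpt$ contained in $\NetTrace$ and show $\infin(\cpt) \models \varphi$. Lemma~\ref{lemma:eachpacketseesoneconfig} gives an index $j$ such that $\NetState_j$ induces $\cpt$; in particular, $\cpt$ is induced by whichever intermediate policy $\NetPolicy_k$ is active at $\NetState_j$. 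Since $\comSeq$ is careful, clause (3) of carefulness gives $\NetPolicy_k \models \varphi$, and unpacking the definition of policy-level satisfaction this yields $\infin(\cpt) \models \varphi$, as required.

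For wait-correctness, I would take an arbitrary state $\NetState_i$ in $\NetTrace$ such that the next transition is a wait transition, and show that either a new packet is eventually generated or the packet eventually reaches $\DROP$/$\WORLD$ and stays there. The key observation is that after the wait transition fires, $\waitBool$ is set to $\true$, so no update transition can occur until a new packet transition resets it. Between the wait and any next new packet, the active policy therefore remains some single $\NetPolicy_k$, which by clause (2) of carefulness is loop-free. Loop-freeness of $\NetPolicy_k$ means that any sequence of packet-move transitions applied to the current located packet cannot revisit a (port, switch) pair, and since both $\Ports$ and $\Switches$ are finite, the packet must reach $\DROP$ or $\WORLD$ in a bounded number of steps; thereafter the $\DROP$/$\WORLD$ clause of the packet-move relation keeps it there until a new packet transition. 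Either way, the disjunctive condition in the definition of wait-correctness is satisfied.

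The main obstacle I anticipate is bookkeeping rather than mathematics: pinning down exactly which $\NetPolicy_k$ is active at each $\NetState_i$, and verifying that ``$\NetState_j$ induces $\cpt$'' in the sense of Lemma~\ref{lemma:eachpacketseesoneconfig} coincides with ``$\cpt$ arises from some network trace using policy $\NetPolicy_k$ and the empty command sequence'' in the sense of the definition of $\NetPolicy \models \varphi$. This bridge should be routine since $\cpt$ is defined purely by packet-move transitions, which depend only on the current policy, but it is the place where the argument could get sloppy. Once that bridge is made, both conclusions follow directly from clauses (2) and (3) of carefulness.
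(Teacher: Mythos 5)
Your proposal is correct and follows essentially the same route as the paper: correctness is split into wait-correctness and $\NetTrace \models \varphi$, each is reduced to a clause of carefulness, and Lemma~\ref{lemma:eachpacketseesoneconfig} carries the LTL half exactly as in the paper. The only (minor) divergence is in the wait-correctness half, where you argue directly that the wait transition sets $\waitBool$ and hence freezes the policy at a single loop-free $\NetPolicy_k$ until a new packet transition, whereas the paper reaches the same single-loop-free-policy conclusion by applying the argument of Lemma~\ref{lemma:eachpacketseesoneconfig} to a prefix of the packet's trace longer than the number of ports and switches; both rest on the same facts and your version is, if anything, slightly more direct.
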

\begin{proof}
To show that $\comSeq$ is correct with respect to $\varphi$ and
$\NetPolicy$, we need to show that for all network traces $\NetTrace$
initialized with $\NetPolicy_i$ and $\comSeq$, we have that
$\NetTrace$ is wait-correct and $\NetTrace \models \varphi$. We first
prove that $\NetTrace$ is wait-correct.  Let $\NetTrace=\NetState_0
\NetState_1 \ldots$ be a network trace, and let $i$ be such that the
transition from $\NetState_i$ to $\NetState_i$ is a
wait-transition. We need to prove that for all infinite network traces
that start at $\NetState_i$, and which do not contain a new packet
transition, we have that the packet ends at the port $\DROP$ or
$\WORLD$ after a finite number of steps. Consider a network trace
$\NetTrace'$ that starts at $\NetState_i$ and does not contain a new
packet transition. Let us consider the unique one-packet trace $t$
that starts at the last new packet transition before $\NetState_i$ in
$\NetTrace$ (or which starts at the first position of $\NetTrace$ if
there is no new packet transition in $\NetTrace$), and continues as in
$\NetTrace'$. Consider a prefix $t'$ of $t$ longer than the number of
switches and ports in the network. By the proof of
Lemma~\ref{lemma:eachpacketseesoneconfig}, $t'$ is induced by
$\NetState_p$, for $p$ such that $p < i$. As $\comSeq$ is careful, we
can conclude that $t'$ is induced by a network state with a loop-free
network policy, which means that the packet reaches $\DROP$ or
$\WORLD$ after a finite number of steps.

We now prove that $\NetTrace \models \varphi$.  Let
$\NetTrace=\NetState_0 \NetState_1 \ldots$ be a network trace and
$\cpt = \LocPkt_0\LocPkt_1 \ldots \LocPkt_n$ be a complete one-packet
trace contained in $\NetTrace$.  We show that $\infin(\cpt)) \models
\varphi$.  By Lemma~\ref{lemma:eachpacketseesoneconfig}, we have that
there exists $i \in \mathbb{N}$ such that $\cpt$ is induced by a
$\NetState_i$. As $\comSeq$ is careful, we have that for all complete
one-packet traces $t'$ induced by $\NetState_i$, we have that $t'
\models \varphi$. Therefore, we can conclude that $\cpt \models
\varphi$, and as there were no conditions on how $\cpt$ was chosen, we
have that $\NetTrace \models \varphi$. This concludes the proof.
\end{proof}

\begin{theorem}[Soundness]
Given an initial policy $\NetPolicy_i$ a final policy $\NetPolicy_f$,
and an LTL formula $\varphi$, \textsc{OrderUpdate} returns a command sequence
$\comSeq$, then $\Step{\NetPolicy_i}{\comSeq}{\NetPolicy_f}$, and
$\comSeq$ is correct with respect to $\varphi$ and $\NetPolicy_i$.
\end{theorem}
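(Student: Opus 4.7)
The plan is to reduce the theorem to Lemma \ref{lemma:careful}: it suffices to show that whenever \textsc{OrderUpdate} returns a command sequence $\comSeq$, we have $\Step{\NetPolicy_i}{\comSeq}{\NetPolicy_f}$, and $\comSeq$ is careful with respect to $\varphi$ and $\NetPolicy_i$. Once carefulness is established, correctness follows immediately from Lemma \ref{lemma:careful}. I would argue both properties by induction on the depth of the DFS recursion corresponding to the returned sequence.

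For the transition property $\Step{\NetPolicy_i}{\comSeq}{\NetPolicy_f}$, the base case occurs when \textsc{DFSforOrder} is invoked with $\NetPolicy = \NetPolicy_f$ and returns $(\textit{true}, [\NetPolicy_f])$, which corresponds to an empty command sequence that trivially leaves the current policy at $\NetPolicy_f$. In the inductive case, the returned list is built at line \ref{line:addcommand} by prepending $\NetPolicy :: \wait :: L$, where $L$ is the list produced by the recursive call on some $(\NetPolicy_\next, cs) \in \textsc{NextPolicies}(\NetPolicy)$. By definition of \textsc{NextPolicies}, the transition from $\NetPolicy$ to $\NetPolicy_\next$ corresponds to a single switch update, and by the induction hypothesis the recursive suffix transforms $\NetPolicy_\next$ into $\NetPolicy_f$. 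Unrolling the recursion back to the top-level call on $\NetPolicy_i$ yields $\Step{\NetPolicy_i}{\comSeq}{\NetPolicy_f}$.

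For carefulness, I would verify the three conditions of the definition. First, every command at an odd position is $\wait$: this follows structurally from the interleaving pattern $\NetPolicy :: \wait :: L$ at line \ref{line:addcommand}, which places a wait command between every pair of successive updates. Second, every policy $\NetPolicy_j$ in the induced sequence is loop-free: the top-level call of \textsc{OrderUpdate} rejects inputs where $\NetPolicy_i$ or $\NetPolicy_f$ contain loops, and for every intermediate policy obtained by the recursive step, the check $\mathrm{hasNewLoops}(\NetPolicy_\next, cs)$ at line \ref{line:checkNewLoops} succeeded. Combined with the loop-freeness of the immediate predecessor policy (established inductively), this guarantees that no loop is present in $\NetPolicy_\next$, since any loop in $\NetPolicy_\next$ must either be old (ruled out by induction) or pass through $cs$ (ruled out by hasNewLoops). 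Third, every intermediate policy satisfies $\varphi$: the call $\mathrm{ModelCheck}(\NetPolicy, \varphi)$ at line \ref{line:modelcheck} returns $\textit{ok}$ for each configuration along the successfully returned path, so each such $\NetPolicy_j$ satisfies $\varphi$.

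The main subtlety, and the step I expect to require the most care, is arguing that the counterexample-based pruning via the formulas $V$ and $W$ does not interfere with soundness. The concern is whether an intermediate configuration on the returned path might be skipped over its ModelCheck step because it was tagged by $W$ or $V$. However, a configuration is only placed in $W$ when ModelCheck has refuted some configuration subsuming it, which certifies that the current configuration also fails $\varphi$; thus no configuration in $W$ can appear on a successful returned path. Configurations in $V$ are only revisited by returning $(\textit{false}, [])$, so again they cannot contribute to a positively returned sequence. Once this observation is in place, the carefulness witnesses combine with Lemma \ref{lemma:careful} to conclude the theorem.
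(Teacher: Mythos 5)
Your proposal is correct and follows essentially the same route as the paper's proof: establish $\Step{\NetPolicy_i}{\comSeq}{\NetPolicy_f}$ from the structure of the DFS, verify the three conditions of carefulness using the wait-interleaving at the prepend step, the loop checks, and the model-checking call, and then invoke Lemma~\ref{lemma:careful}. Your extra observation that the $V$/$W$ pruning only ever produces \emph{false} returns, and hence cannot place an unchecked configuration on a successfully returned path, is a detail the paper leaves implicit but does not change the argument.
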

\begin{proof}
It is easy to show that if \textsc{OrderUpdate} returns $\comSeq$, then
$\Step{\NetPolicy_i}{\comSeq}{\NetPolicy_f}$. Each update in the
returned sequence changes a switch policy of one switch $\Switch$ to
the policy $\NetPolicy_f(\Switch)$, and the algorithm terminates when
all switches $\Switch$ such that $\NetPolicy_i(\Switch)
\neq \NetPolicy_f(\Switch)$ have been updated.
Let $\NetPolicy_0 \NetPolicy_1 \ldots \NetPolicy_n$ be induced by
$\comSeq$ and $\NetPolicy_i$.  
We show that if \textsc{OrderUpdate} returns $\comSeq$, then $\comSeq$ is
careful with respect to $\varphi$ and $\NetPolicy_i$. 
To prove that
$\comSeq = \command_0\command_1 \ldots \command_{n-1}$ is careful, we
show that:
\begin{compactitem}
\item for all $j \in [0,n-1]$, if $j$ is odd, then $\command_j =
  \wait$. One can simply observe that this is true, given how the
  sequence of updates is constructed in the algorithm
  (Line~\ref{line:addcommand}).
\item for all $j \in [0,n]$, $\NetPolicy_j$ is loop-free. This holds,
  as we check that the initial configuration is loop-free, and that
  each update does not introduce a loop (Line~\ref{line:checkNewLoops}). 
\item for all $j \in [0,n]$, $\NetPolicy_j \models \varphi$. This is
  ensured by the call to a model checker (Line~\ref{line:modelcheck}).
\end{compactitem}  

Finally we can use Lemma~\ref{lemma:careful} to infer that $\comSeq$
is careful with respect to $\varphi$ and $\NetPolicy_i$.
\end{proof}

\paragraph*{Completeness.}
The \textsc{OrderUpdate} algorithm is also complete with respect to
simple and careful command sequences. Let $\comSeq =
\command_0\command_1 \ldots \command_{n-1}$ be a command
sequence. Such a sequence is {\em simple} if for each $\Switch \in
\Switches$ there exists at most one $i$ in $[0,n]$ such that
$\command_i$ is an update of the form $\Switch,\SwitchPolicy$. The
proof (omitted here) uses the fact that \textsc{OrderUpdate} searches
through all such sequences (more precisely, all such sequences that do
not use multiple $\wait$ commands in a row). The following proposition
characterizes the cases where the algorithm returns a solution.

\begin{proposition}
\label{prop:complete}
Given an initial network policy $\NetPolicy_i$, a final network policy
$\NetPolicy_f$, and a specification $\varphi$, if there exists a
simple and careful sequence of commands $\comSeq$ such that
$\Step{\NetPolicy_i}{\comSeq}{\NetPolicy_f}$, then
\textsc{OrderUpdate} returns one such sequence.
\end{proposition}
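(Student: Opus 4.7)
The plan is to argue that \textsc{DFSforOrder} performs an exhaustive backtracking search through the reachable intermediate policies, and that the two pruning mechanisms---the visited set $V$ and the excluded set $W$---never discard a policy that could appear on a valid simple and careful sequence. Fix a hypothetical witness $\comSeq^\star$ whose induced policy sequence is $\NetPolicy_0^\star \NetPolicy_1^\star \ldots \NetPolicy_m^\star$ with $\NetPolicy_0^\star = \NetPolicy_i$ and $\NetPolicy_m^\star = \NetPolicy_f$. Because $\comSeq^\star$ is simple, each $\NetPolicy_k^\star$ has the form $\NetPolicy_i[S_k \gets \NetPolicy_f]$ for an increasing chain of subsets $S_k$ of switches on which $\NetPolicy_i$ and $\NetPolicy_f$ differ; because it is careful, each $\NetPolicy_k^\star$ is loop-free and satisfies $\varphi$. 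I would first confirm that the recursion tree of \textsc{DFSforOrder}, absent pruning, contains a branch whose successive recursive calls visit exactly $\NetPolicy_0^\star,\NetPolicy_1^\star,\ldots,\NetPolicy_m^\star$: this follows because $\mathrm{NextPolicies}(\NetPolicy_k^\star)$ necessarily contains $(\NetPolicy_{k+1}^\star,\Switch_k)$, where $\Switch_k$ is the unique switch updated between steps $k$ and $k+1$.

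Next I would check that along this target branch each guard in the body of \textsc{DFSforOrder} succeeds. The $\mathrm{hasNewLoops}$ test at $\NetPolicy_k^\star$ passes since $\NetPolicy_k^\star$ is loop-free, and the $\mathrm{ModelCheck}$ call returns \textbf{ok} since $\NetPolicy_k^\star \models \varphi$. The only remaining way the algorithm could fail to follow this branch is if some $\NetPolicy_k^\star$ gets pruned by $V$ or by $W$. The $V$ case is easy: $V$ only holds configurations whose entire DFS subtree has already been explored, so if $\NetPolicy_k^\star \in V$ when encountered, an earlier visit already searched all simple careful continuations from $\NetPolicy_k^\star$, and by the inductive structure of the recursion either the algorithm has already returned a solution or no such continuation exists---contradicting the existence of $\comSeq^\star$.

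The main obstacle is ruling out that some $\NetPolicy_k^\star$ satisfies a disjunct of $W$. The argument rests on the semantic content of $\mathrm{analyzeCex}$: a counterexample is a single one-packet trace, whose existence depends only on the policies of the switches it visits, so the disjunct that $\mathrm{analyzeCex}$ adds describes exactly those configurations that agree with the offending one on the switch policies appearing on the trace, and any such configuration exhibits the same counterexample. Hence if $\NetPolicy_k^\star$ satisfied any disjunct of $W$, it would either violate $\varphi$ (for counterexamples coming from $\mathrm{ModelCheck}$) or fail to be loop-free (for those coming from $\mathrm{hasNewLoops}$), contradicting carefulness of $\comSeq^\star$. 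Combining these observations, the DFS must descend all the way to $\NetPolicy_f$ along some branch corresponding to a simple careful ordering (not necessarily $\comSeq^\star$ itself, but some reordering discovered first), at which point it returns the assembled command sequence on Line~\ref{line:addcommand}, completing the proof.
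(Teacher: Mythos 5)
Your argument is correct and follows the same strategy the paper indicates for its (omitted) proof: \textsc{OrderUpdate} exhaustively enumerates simple update orderings, so completeness reduces to showing that the pruning by $V$ and $W$ never discards a configuration lying on a careful witness. Your treatment of $W$ via the semantics of $\mathrm{analyzeCex}$ (a counterexample trace depends only on the policies of the switches it traverses, so every configuration matching the excluding disjunct is genuinely bad) and of $V$ via the path-independence of a configuration's successor set supplies exactly the details the paper's one-sentence sketch leaves out.
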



\section{Implementation}
\label{sec:implementation}



We have built an implementation of \textsc{OrderUpdate} in OCaml. The
functions $\mathrm{ModelCheck}(\NetPolicy,\varphi)$ and
$\mathrm{hasNewLoops}(\NetPolicy,cs)$ are implemented by calling out
to the NuSMV~\cite{CCGGPRST02} model checker on suitable encodings of
the network configuration. More specifically, the function
$\mathrm{hasNewLoops}(\NetPolicy_\next,cs)$ takes as parameters the
updated network policy and the switch that was updated, and checks
that no new loops were introduced by the update. This check can be
performed using the LTL formula $G (cs \rightarrow \neg X (F\ cs))$,
as any newly introduced loops must pass through the updated switch.

\paragraph*{NuSMV models.} 
The NuSMV encodings of network configurations are similar to the
formal model described in Section~\ref{sec:model}: Packets are
represented as tuples consisting of {\tt src}, {\tt dst}, and {\tt
  purpose}, where {\tt src} is source of the packet (e.g., a ``guest''
host), {\tt dst} is the destination of the packet, and {\tt purpose}
is a general field (e.g. ``Web traffic''). Switch policies are encoded
as NuSMV expressions over these variables ({\tt src}, {\tt dst}, and
{\tt purpose}) as well as ingress ports. The model has a single entry
point---a port {\tt Start} from which a packet can transition to an
ingress port on any switch. Finally, as in Section~\ref{sec:model}, we
reduce the size of the NuSMV input by transitioning located packets to
the next ingress port after forwarding---i.e., we inline the links
between the output port on one switch and the ingress port at
another. Figure~\ref{fig:nusmv} gives the NuSMV encoding of the
initial configuration for the firewall example from
Section~\ref{sec:overview}.


\paragraph*{Rule granularity.} 
Recall that we represent switch policies as partial functions, and we
model updates that apply at the granularity of whole switches. Of
course, in real switches, policies are represented using {\em rules}
that ``match'' the domain of the function, and the switch forwards
packets according to the best matching rule. Hence, it is important to
be able to encode finer-grained updates that only modify particular
rules on switches---indeed, such updates are used in both of the
motivating examples from Section~\ref{sec:overview}. Fortunately, rule
granularity can be easily reduced to switch granularity: we transform
the switch into a sequence of switches, where each switch forwards
packets matched by one rule, and passes all unmatched packets along to
the next switch. We use this technique in many of our examples.

\begin{figure}[t]
  \centering
  \begin{alltt}\small
MODULE main
VAR
    port : \{I_0, F1_0, F2_0, F3_0, START, WORLD, DROP\};
    src : \{Auth, Guest\};
    purpose : \{Web, Other\};
ASSIGN
    next(port) := case
        port = START : I_0;
        port = I_0 \& src = Auth : \{F1_0, F2_0\};
        port = I_0 \& src = Guest : F3_0;
        port = F1_0 : {WORLD};
        port = F2_0 : {WORLD};
        port = F3_0 \& purpose = Web : WORLD;
        port = F3_0 \& purpose = Other : DROP;
        port = WORLD : WORLD;
        port = DROP : DROP;
    esac;
    next(src) := src;
    next(purpose) := purpose;
INIT port = START;
LTLSPEC G (purpose = Other \& src = Guest -> F port = DROP) \& 
          ((src = Auth | src = Guest \& purpose = Web) -> F port = WORLD);
  \end{alltt}
  \caption{NuSMV encoding of firewall example.}
\label{fig:nusmv}
\end{figure}

\paragraph*{Other algorithms.}

Besides the \textsc{OrderUpdate} algorithm, we have also implemented
two additional algorithms for comparison purposes.
The \textsc{Refine} algorithm provides a direct implementation of a
counterexample-guided synthesis approach to our problem. In this
approach, we add a Boolean variable for each switch to model whether
the switch has updated or not. We allow switches to update as the
packet traverses the network, with no more than one switch updating
per new packet transition. We use counterexamples learned from NuSMV
to refine our model, explicitly preventing the update order appearing
in the counterexample. The process continues until either the final
configuration cannot be reached or any sequence of updates possible in
the refined model is safe.

The \textsc{ConfigPairs} algorithm has the same structure as the
\textsc{OrderUpdate} algorithm, but includes an additional Boolean
variable for the switch being updated. This variable models whether
the switch has updated or not. We allow the switch to update at any
time, including while the packet traverses the network. In effect,
there is a model checking call for each pair of configurations in the
worst case (as opposed to a call per configuration). This is because
the algorithm in the preceding section relies on
Lemmas~\ref{lemma:eachpacketoneupdate}
and~\ref{lemma:eachpacketseesoneconfig}, rather than on checking pairs
of configurations.

\section{Experiments}
\label{sec:experiments}

To evaluate the effectiveness of our implementation, we used it to
generate update sequences for several examples. To provide a
comparison, we compared our main \textsc{OrderUpdate} algorithm to our
own implementations of the (simpler) \textsc{Refine} and
\textsc{ConfigPairs} algorithms.





\paragraph*{Goals.}
The most important parameters of the network update problems are $N$,
the total number of switches in the network, and $M$, the number of
switches whose switch policy differs between initial and final
configuration. Note that the size of the solution space is $M!$. The
goal of our experimental evaluation is to quantify how our tool scales
with growing $M$ and $N$, both for problems where a solution exists
and for problems where the solution does not exist.  We believe that
an important class of network update problems that occurs in practice
is when $N$ is on the order of $1000$, and $M$ is on the order of
$10$---such updates arise when there is a problem on a small number of
nodes and the network must route around it.

\paragraph*{Benchmarks.}
We ran our tests on specific network configurations, parameterized
by $N$ and $M$. The topology of the network, depicted in
Figure~\ref{fig:experiments}~(a), is as follows: the network has an
inner part consisting of a sparse but connected graph, and an outer
part with a larger number of nodes and ingresses reachable in two
hops. In the experiments, we removed several of the switches in the
inner part of the network while maintaining connectivity, so that at
all times each ingress port is reachable from the other
two. Intuitively, this experiment could model taking down switches for
maintenance. The two policies are computed using shortest-path
computations before and after the switches are removed. This
experiment allows us to both scale the inner part, increasing the
number of switches that differ between the policies, and also scale
the total number of switches by increasing the number of switches in
the outer parts.


\paragraph*{Results.}
We ran our experiments using a laptop machine with a 2.2 GHz Intel
processor and 4 GB RAM. We used NuSMV version 2.5.4 as the external
model checker. 

\begin{description}
\item[Scaling network size:] The first experiment
  tests how our tool scales with $N$ (the total number of nodes). We
  fixed the number of nodes updating at $13$ and ran the tool on
  graphs of size $100$, $250$, $500$, and $1000$. We ran each
  experiment using the \textsc{OrderUpdate} algorithm discussed in
  Section~\ref{sec:algorithms}, as well as \textsc{Refine} and
  \textsc{ConfigPairs} algorithms described in
  Section~\ref{sec:implementation}.  The \textsc{Refine}
  implementation failed on the two larger inputs. The results are
  reported in Figure~\ref{fig:experiments}~(b).

\item[Scaling update size:] The next experiment tests how our tool
  scales with $M$ (the number of nodes updating). In this experiment,
  we held $N$ (the total number of nodes) fixed at $500$ and ran the
  tool with the total number of nodes updating between $5$ and $15$.
  We show the results for \textsc{OrderUpdate} algorithm only, as the
  above experiments show that the other two do not perform well with
  $500$ nodes. The results are reported in
  Figure~\ref{fig:experiments}~(c).

\item[Impossible updates:] The final experiment tests how our tool
  performs on impossible updates---i.e., updates for which no safe and
  careful sequence of switch updates exists. We modified the benchmark
  slightly so that in the final configuration, the ingress switches
  drop packets destined for them instead of forwarding them out to the
  world. In this experiment, we used updates that affected $8$ of the
  nodes. The results of this experiment are shown in
  Figure~\ref{fig:experiments}~(d). We also report how the tool
  performs without counterexample analysis here (and not in the
  previous tables), as counterexamples are most helpful when there are
  many incorrect configurations. It is interesting to note that
  although \textsc{Refine} does not scale as well to large numbers of
  nodes, it is able to quickly determine when an update is impossible.
\end{description}

\paragraph*{Summary.}
Overall, our experiments show that our tool scales to the class of
network updates problems outlined above. For a network with $N=1000$
nodes, $M=13$ of which need to be updated, the running time is $18$
minutes. Our tool also scales for a larger number of nodes
updating. For $500$ nodes total, and $30$ nodes updating, the running
time is $10$ minutes. These running times are far too large for online
use by network operators, but we emphasize that we report on a
prototype tool---our primary goal was to confirm feasibility of our
approach. We leave building a well-engineered tool to future work.  We
note that if it is not possible to find an update, the tool takes much
longer to complete. This is because the tool needs to go through a
large number of possible update sequences. Here, our counterexample
analysis helps significantly, reducing the running time for the case
$N=500$, $M=8$ by $85\%$.  However, the tool does not scale well with
$M$ in impossible updates; with $M=10$ this example ran for over $45$
minutes.

\begin{figure}[t]
  (a)\\
  \centerline{%
  \begin{tikzpicture}[scale=.275,auto=left,
      every node/.style={circle,draw=black,minimum size=10}]
    \node (ingress1) at (-8,0) {in 1};
    \node (ingress2) at (6.47,4.24) {in 2};
    \node (ingress3) at (6.47,-4.24) {in 3};
    \node (i1) at (-3,0) {};
    \node (i2) at (-0.93,2.85) {};
    \node (i3) at (-0.93,-2.85) {$x$};
    \node (i4) at (2.43,1.76)  {};
    \node (i5) at (2.43,-1.76) {};
    \node (f11) at (-5.25,0) {};
    \node (f12) at (-5.25,2) {};
    \node (f13) at (-5.25,-2) {};
    \node (f21) at (4.25,3.09) {};
    \node (f22) at (3.07,4.70) {};
    \node (f23) at (5.42,1.47) {};
    \node (f31) at (4.25,-3.09) {};
    \node (f32) at (3.07,-4.70) {};
    \node (f33) at (5.42,-1.47) {};
    \path[->,every node/.style={font=\scriptsize}] (i1)
    edge [-] (i2) (i2)
    edge [-] (i4) (i4)
    edge [-] (i5) (i5)
    edge [-] (i3) (i3)
    edge [-] (i1) (ingress1)
    edge [-] (f11) (ingress1)
    edge [-] (f12) (ingress1)
    edge [-] (f13) (f11)
    edge [-] (i1) (f12)
    edge [-] (i1) (f13)
    edge [-] (i1) (ingress2)
    edge [-] (f21) (ingress2)
    edge [-] (f22) (ingress2)
    edge [-] (f23) (f21)
    edge [-] (i4) (f22)
    edge [-] (i4) (f23)
    edge [-] (i4) (ingress3)
    edge [-] (f31) (ingress3)
    edge [-] (f32) (ingress3)
    edge [-] (f33) (f31)
    edge [-] (i5) (f32)
    edge [-] (i5) (f33)
    edge [-] (i5);
  \end{tikzpicture}}
  
  \bigskip

  (b)
  \begin{center}
  \begin{tabular}{ | c | c | c | c | c | }\hline
    Algorithm & 100 Nodes & 250 Nodes & 500 Nodes & 1000 Nodes \\ \hline 
    \textsc{OrderUpdate} & 10 & 83 & 355 & 1101 \\
    \textsc{ConfigPairs} & 129 & 1244 & 3731 & 12077 \\
    \textsc{Refine} & 55 & 267 & Out of memory & Out of memory \\
    \hline
  \end{tabular}
  \end{center}

  \bigskip

  (c)
  \begin{center}
  \begin{tabular}{| c | c | c| c | c| c| c| c| c| c| c| c | c | c |}\hline
    Nodes & 5 & 6 & 7 & 8 & 9 & 10 & 11 & 12 & 13 & 14 & 15 & 30 & 60 \\
    \hline
    Time & 165 & 142 & 166 & 222 & 222 & 205 & 273 & 276 & 354 & 339 & 370 & 611 & 2106 \\
    \hline
  \end{tabular} 
  \end{center}

  \bigskip

  (d)
  \begin{center}
  \begin{tabular}{@{} | c | c | c | c | c | @{}}\hline
    Algorithm & 100 Nodes & 250 Nodes & 500 Nodes & 1000 Nodes \\ \hline
    \textsc{OrderUpdate} & 19 & 170 & 900 & 3963 \\
    \textsc{OrderUpdate} w/o counterexamples & 101 & 1793 & 6269 & Timeout \\
    \textsc{Refine} & 20 & 101 & Out of memory & Out of memory \\
    \hline
  \end{tabular}
  \end{center}
\caption{Experiments: (a) topology, (b) scaling network size, (c)
  scaling update size, (d) impossible updates. All times are in
  seconds.}
\label{fig:experiments}
\end{figure}

\section{Related Work}
\label{sec:related}

Network updates are a form of concurrent programming. Synthesis for
concurrent programs has attracted considerable research attention in
recent years~\cite{SJB08,VYY10,CCG08,VY08}. In work by Solar-Lezama et
al.~\cite{SJB08} and Vechev et al.~\cite{VY08}, an order for a given
set of instructions is synthesized, which is a task similar to
ours. However, the problem settings in the traditional synthesis work
and in this paper are quite different. First, traditional synthesis is
a game against the environment which (in the concurrent programming
case) provides inputs and schedules threads; in contrast our synthesis
problem is a reachability problem on the space of
configurations. Second, the space of network configurations is very
rich; determining whether a configuration is false is an LTL model
checking problem by itself.

Update mechanisms have also been studied in the networking community.
This paper builds on previous work on consistent updates by Reitblatt
et al.~\cite{frenetic-consistent-updates}. However, unlike our tool,
which allows operators to specify explicit invariants, consistent
updates preserve \emph{all} path properties. This imposes a
fundamental overhead as certain efficient updates that are produced by
our tool would not be valid as consistent updates. Another line of
work has investigated update mechanisms that minimize disruptions in
specific routing
protocols~\cite{francois-igp,francois-bgp,raza:11,vanbever:11,kushman:r-bgp}. However,
these methods are tied to particular protocols such as BGP, and only
guarantee basic properties such as connectivity. In particular, they
do not allow the operator to specify explicit invariants.

\section{Conclusion}
\label{sec:conclusion}

Network updates is an area where techniques developed for program and
controller synthesis could be very beneficial for state-of-the-art
systems. There are several possible directions for future work. We
plan to investigate further optimizations that could bring down the
running time on realistic networks from minutes to seconds, improving
usability. We also plan to investigate the network update problem with
environment changing while updates are executed, leading to two-player
games. It would also be interesting to abstract the structure of the
network and apply parametric synthesis techniques, and to explore
techniques that incorporate considerations of network traffic, using
ideas from controller synthesis. Another interesting direction is to
investigate algorithms that rank updates and select the ``best'' one
when there are multiple correct updates. Finally, we would also like
to extend our tool to provide guarantees about properties involving
sets of packets (such as per-flow consistency from Reitblatt et
al.~\cite{frenetic-consistent-updates}), and about properties
concerning bandwidth and other quantitative resources.

\paragraph*{Acknowledgments.} We wish to thank the SYNT 
reviewers, Arjun Guha, and Mark Reitblatt for helpful comments and
suggestions. Our work is supported in part by NSF under grants
CNS-1111698, CCF-1253165, and CCF-0964409; ONR under award
N00014-12-1-0757; by a Google Research Award; and by a gift from Intel
Corporation.

\bibliographystyle{eptcs} 
\bibliography{main}

\end{document}